\newtheorem{theorem}{Theorem}
\newtheorem{condition}[theorem]{Condition}
\newtheorem{lemma}[theorem]{Lemma}
\newenvironment{proof}[1][Proof]{\noindent\textbf{#1.} }{\ \rule{0.5em}{0.5em}}
\begin{document}

\title{\textbf{Absence of Breakdown of the Poisson Hypothesis }\\\textbf{I. Closed Networks at Low Load }}
\author{Alexander Rybko$^{1}$, Senya Shlosman$^{1,2}$, Alexander Vladimirov$^{1}$\\$^{1}$ Inst. of the Information Transmission Problems,\\Russian Academy of Sciences,\\Moscow, Russia\\$^{2}$ Centre de Physique Theorique, UMR 6207,\\CNRS, Luminy, Marseille, France}
\maketitle

\begin{abstract}
We prove that the general mean-field type networks at low load behave in
accordance with the Poisson Hypothesis. That means that the network
equilibrates in time independent of its size. This is a \textquotedblleft
high-temperature\textquotedblright\ counterpart of our earlier result, where
we have shown that at high load the relaxation time can diverge with the size
of the network (\textquotedblleft low-temperature\textquotedblright). In other
words, the phase transitions in the networks can happen at high load, but
cannot take place at low load.

\textbf{Keywords: }coupled dynamical systems, non-linear Markov processes,
stable attractor, phase transition, long-range order.

MSC-class: 82C20 (Primary), 60J25 (Secondary)

\end{abstract}

\section{Introduction}

The Poisson Hypothesis is a device to predict the behavior of large queuing
networks. It was formulated first by L. Kleinrock, and concerns the following situation.

Suppose we have a large network of servers, through which many customers are
traveling, being served at different nodes of the network. If the node is
busy, the customers wait in the queue. Customers are entering into the network
from the outside via some nodes, and these external flows of customers are
Poissonian, with constant rates. The service time at each node is random,
depending on the node, and the customer.

We are interested in the stationary distribution $\pi_{\mathcal{N}}$ at a
given node $\mathcal{N}$: what is the distribution of the queue at
$\mathcal{N}$, what is the average waiting time, etc.

Except for a very few special cases, when the service times are exponential,
the distributions $\pi_{\mathcal{N}}$ in general can not be computed. The
recipe of the Poisson Hypothesis for approximate computation of $\pi
_{\mathcal{N}}$ and the prediction for the (long-time, large-size) behavior of
the network is the following:

\begin{itemize}
\item Consider the total flow $\mathcal{F}$ of customers to a given node
$\mathcal{N}.$ Then $\mathcal{F}$ is approximately equal to a Poisson flow,
$\mathcal{P},$ with a time dependent rate function $\lambda_{\mathcal{N}%
}\left(  T\right)  .$

\item The exit flow from $\mathcal{N}$ -- \textbf{not} Poissonian in general!
-- has a rate function $\gamma_{\mathcal{N}}\left(  T\right)  ,$ which is
smoother than $\lambda_{\mathcal{N}}\left(  T\right)  $ (due to averaging,
taking place at the node $\mathcal{N}$).

\item As a result, the flows $\lambda_{\mathcal{N}}\left(  T\right)  $ at
various nodes $\mathcal{N}$ should converge to constant limits $\bar{\lambda
}_{\mathcal{N}}\approx\frac{1}{T}\int_{0}^{T}\lambda\left(  t\right)  dt$, as
$T\rightarrow\infty,$ the flows to different nodes being almost independent.

\item The above convergence is uniform in the size of the network.

\item Compute the stationary distribution $\hat{\pi}_{\mathcal{N}}$ at
$\mathcal{N},$ corresponding to the inflow $\mathcal{P}(\bar{\lambda
}_{\mathcal{N}}).$ (These computations are the subject of classical queuing
theory and usually provide explicit formulas.) The claim is that $\hat{\pi
}_{\mathcal{N}}\approx\pi_{\mathcal{N}}.$
\end{itemize}

The Poisson Hypothesis is supposed to give a good estimate if the internal
flow to every node $\mathcal{N}$ is a sum of flows from many other nodes, and
each of these flows constitute only a small fraction of the total flow to
$\mathcal{N}.$

Clearly, the Poisson Hypothesis can not be literally true. It can hopefully
hold only after some kind of ``thermodynamic''\ limit is taken. Its meaning is
that in the long run the different nodes become virtually independent, i.e.
propagation of chaos takes place. The reason for that should be that any
synchronization of the nodes, if initially present, dissolves with time, due
to the randomness of the service times.

For some time it was believed that the Poisson Hypothesis behavior is a
general characteristic of all large highly connected networks. It was proven
in some special cases in \cite{St1} and \cite{RSh1}. However, the
counterexamples were also found recently (see \cite{RSh2}, where some special
service times were considered, and especially \cite{RShV1}, where the network
with exponential service times is constructed, for which the PH breaks down at
high load). We think that the situation here resembles the one of statistical
mechanics, where \textit{all} the models behave alike at high temperatures,
while at low temperatures \textit{some} of them exhibit phase transition
behavior. In the network context the role of the temperature is played by the
average load per server, $\rho,$ and the example described in \cite{RShV1} is
the example of such (first-order) phase transition at high load, where the
corresponding infinite system has multiple equilibrium states. In the present
paper we pursue this analogy further, by showing that under very general
conditions the Poisson Hypothesis holds for general mean-field type closed
networks described below in the low load regime. In the forthcoming paper
\cite{RShV2} we will prove similar results for open networks.

\subsection{The elementary network}

Let $G=\left(  V,E\right)  $ be a finite graph. We want to think about the
graph $G$ as a network of servers, serving clients. So we suppose that $G$ is
endowed with some extra structures. First, at every server $v\in V$ there
might be clients of different nature, so we associate to every $v$ a (finite)
set of types -- or \textit{colors} -- $\left\{  c\in\mathcal{C}\left(
v\right)  \right\}  .$ So we introduce the disjoint union $\bar{V}=\cup_{v\in
V}\mathcal{C}\left(  v\right)  $ of all possible types of clients. We connect
a pair $\left(  v_{1},c_{1}\right)  $ to a pair $\left(  v_{2},c_{2}\right)
,$ where $c_{i}\in\mathcal{C}\left(  v_{i}\right)  ,$ by a \textit{directed}
bond, $e\in\bar{E},$ if $\left(  v_{1},v_{2}\right)  $ is an edge in $E,$ and
moreover if it can happen that a client of type $c_{1}$ is served by the
server $v_{1}$ and as a result is sent to the server $v_{2}$ as a type $c_{2}$ client.

Thus far we just described another (bigger, and directed) graph, $\bar
{G}=\left(  \bar{V},\bar{E}\right)  .$ Next, we have to specify the service
times. We suppose them to have exponential distribution, with the rates
$\gamma\left(  v,c\right)  >0.$

The last piece of information we want to have on $G$ is the transition
probability matrix, $P=\left\Vert P\left[  \left(  v_{1},c_{1}\right)
,\left(  v_{2},c_{2}\right)  \right]  \right\Vert .$ The number $P\left[
\left(  v_{1},c_{1}\right)  ,\left(  v_{2},c_{2}\right)  \right]  $ is the
probability that the client of type $c_{1}$ at the node $v_{1}$ will go to the
node $v_{2}$ as a type $c_{2}$ client.

The condition we want to impose on our \textit{elementary network }$\left(
\bar{G},\gamma\right)  $ is that of \textit{connectedness:}

\begin{condition}
\label{csc} Let us consider a continuous time Markov process on $G,$
corresponding to the \textit{Case of a Single Client. }That means that we have
just one client in our network. Initially it is sitting at some server $v,$
having some color $c\in\mathcal{C}\left(  v\right)  .$ As time goes, the
client is changing his location and color randomly, with the rates given by
the rate function $\gamma\left(  \cdot,\cdot\right)  ,$ and transition
probabilities $P.$ We want this (continuous time finite state) Markov process
to be ergodic. (This is equivalent to the ergodicity of the Markov chain
defined by $P.$)
\end{condition}

In the networks that we are going to consider, there will be many clients, so
we need to describe what happens if several clients come for the service to
the same server. Here we can treat a fairly general situation. Suppose that at
the server $v$ at time $t$ we have a queue $x_{v}=\left\{  c_{1}%
,c_{2},...,c_{k}\right\}  $ of clients, $c_{i}\in\mathcal{C}\left(  v\right)
.$ That means that these clients came to $v$ before $t,$ and are still there
at the moment $t,$ waiting to be served. They are listed in the order of their
arrival. The number $k=k\left(  x_{v}\right)  $ is called the queue length (at
$v$ at the moment $t$). The protocol $R\left(  v\right)  $ is a rule
$i_{v}\left(  \ast\right)  $ for the server $v,$ which assigns to every
non-empty queue $x_{v}$ the index $1\leq i_{v}\left(  x_{v}\right)  \leq k,$
which is the index of the client in $x_{v}$ who is served at the moment $t.$
Once this client finishes its service and leaves the server, the queue $x_{v}$
turns into
\begin{equation}
x_{v}\ominus\left\{  i_{v}\left(  x_{v}\right)  ,c_{i_{v}\left(  x_{v}\right)
}\right\}  \equiv\left\{  c_{1},c_{2},...,c_{i_{v}\left(  x_{v}\right)
-1},c_{i_{v}\left(  x_{v}\right)  +1},...,c_{k}\right\}  , \label{20}%
\end{equation}
so $k\left(  x_{v}\ominus\left\{  i_{v}\left(  x_{v}\right)  ,c_{i_{v}\left(
x_{v}\right)  }\right\}  \right)  =k\left(  x_{v}\right)  -1.$

For example, the server can just serve the clients in the order they arrive,
i.e. $i_{v}\left(  x_{v}\right)  \equiv1$. This protocol is called FIFO --
First-In-First-Out. We will treat fairly general protocols, with two restrictions.

The first is that the server can not be idle if there are clients waiting for
the service. This is called \textit{conservative }discipline\textit{. }

The second is the following \textit{monotonicity} property. Let $c_{1}^{t_{1}%
},c_{2}^{t_{2}},...,c_{k}^{t_{k}},...$ be the schedule of arrival of customers
to the node $v,$ with $t_{i}$ being the moment of arrival of the $i$-th
customer. Suppose we know the service time needed for every client. Then the
protocol $R\left(  v\right)  $ allows us to define the function $N\left(
t\right)  ,$ which is the length of the queue at the moment $t\geq0.$ Consider
now another arrival schedule, which differs by exactly one extra client
$\mathfrak{c}$: $c_{1}^{t_{1}},c_{2}^{t_{2}},...,c_{i}^{t_{i}},\mathfrak{c}%
^{\mathfrak{t}},c_{i+1}^{t_{i+1}},...,c_{k}^{t_{k}},...,$ where $t_{i}%
<\mathfrak{t}<t_{i+1}.$ Then the queue length function would change to a
different one, $N^{\mathfrak{c}}\left(  t\right)  .$ We need that
$N^{\mathfrak{c}}\left(  t\right)  \geq N\left(  t\right)  $ for all $t>0.$
This property holds for most of the natural disciplines.

(Note that it is allowed that the service of the client $c$ is interrupted
once a client with higher priority arrives. The service of $c$ is then resumed
according to the priority rule $R\left(  v\right)  $.)

\subsection{Mean-field type graphs}

Now we associate to the graph $G=\left(  V,E\right)  $ the sequence
$\mathcal{G}_{1}\left(  G\right)  =G\subset\mathcal{G}_{2}\left(  G\right)
\subset...\subset\mathcal{G}_{M}\left(  G\right)  \subset...$ of graphs, which
is constructed as follows. For every $M$ consider a disjoint union of $M$
copies $G^{i}=\left(  V^{i},E^{i}\right)  $ of the graph $G,$ $i=1,...,M.$
Then the graph $\mathcal{G}_{M}=\left(  \mathcal{V}_{M},\mathcal{E}%
_{M}\right)  $ has for its vertices the set $\mathcal{V}_{M}=\cup_{i=1}%
^{M}V^{i}$ of all the vertices of these $M$ copies of $G.$ We declare a pair
$\left(  v_{1}^{i},v_{2}^{j}\right)  \subset\mathcal{V}_{M}$ to be a bond in
$\mathcal{E}_{M},$ $v_{1}^{i}\in V^{i},v_{2}^{j}\in V^{j},$ $i,j=1,...,M$ iff
two vertices $v_{1},v_{2}\in V$ are connected by an edge $e\in E.$ Thus, every
bond of $G$ produces exactly $M^{2}$ bonds of $\mathcal{E}_{M}.$ There are no
other bonds in $\mathcal{E}_{M}.$

For example, if $G$ consists of just one vertex $v,$ connected to itself by a
loop, then $\mathcal{G}_{M}\left(  G\right)  $ will be a complete graph with
$M$ vertices (plus to every vertex there is a loop attached).

We now define the mean field graphs $\mathcal{G}_{M}\left(  \bar{G}\right)  $
in precisely the same way as above. Of course, these graphs have natural
orientations, inherited from $\bar{G}.$ We keep the rate functions the same,
and we define the transition probability matrix $P_{M}$ by
\[
P_{M}\left[  \left(  v_{1}^{i},c_{1}\right)  ,\left(  v_{2}^{j},c_{2}\right)
\right]  =\frac{1}{M}P\left[  \left(  v_{1},c_{1}\right)  ,\left(  v_{2}%
,c_{2}\right)  \right]  .
\]

In what follows we will fix the elementary network\textit{ }$\left(  \bar
{G},\gamma\right)  ,$ and we will be interested in the corresponding
mean-field networks $\mathcal{G}_{M}\left(  \bar{G}\right)  ,$ populated by
$N=\lfloor\rho M\rfloor$ clients. We will call the parameter $\rho$ the
\textit{load}. We thus have for every $M$ the ergodic Markov process on the
network $\mathcal{G}_{M}\left(  \bar{G}\right)  $ with $N$ clients; let us
denote this process by $\nabla_{M}^{\rho},$ while $\pi_{M}^{\rho}$ will denote
its invariant measure. We will be interested in the asymptotic properties of
the measures $\pi_{M}^{\rho}$ as $M\rightarrow\infty,$ as well as in the
character of the convergence $\nu_{M}^{\rho}\left(  t\right)  \rightarrow
\pi_{M}^{\rho}$ of the state $\nu_{M}^{\rho}\left(  t\right)  $ of the process
$\nabla_{M}^{\rho}$ at time $t$ to its limit $\pi_{M}^{\rho}.$

Informally speaking, the Poisson Hypothesis for the mean-field networks
$\mathcal{G}_{M}\left(  \bar{G}\right)  $ is the following statement about the
behavior of the measures $\nu_{M}^{\rho}\left(  t\right)  $ for large $M:$ if
the initial state $\nu_{M}^{\rho}\left(  0\right)  $ is chosen reasonably --
which means that the initial queues at every node do not exceed some constant
$K$ -- then after some time $T\left(  K\right)  ,$ \textit{independent of
}$M,$

\begin{itemize}
\item the $M\left\vert V\right\vert $ servers of our network become almost
independent, i.e. the measure $\nu_{M}^{\rho}\left(  t\right)  $ is close to a
product measure over the set $\mathcal{V}_{M};$

\item at every node $v\in\mathcal{V}_{M}$ the process $\nu_{M}^{\rho}\left(
t\right)  $ looks as if the inflows of all possible types of customers to $v$
are independent Poisson flows $\mathcal{P}_{c}$, $c\in\mathcal{C}\left(
v\right)  ,$ with \textit{constant} rates $\lambda_{c}$ (depending only on
$\rho,$ the graph $\bar{G}$ and the function $\gamma,$ \textit{but not on }$M$
\textit{or }$t$), which customers are then queuing at $\nu$ and are served
according to the service rule $R\left(  v\right)  $ at $v,$ and leaving the
node after being served. Let us denote the stationary state of such a node by
$\chi^{\left\{  \lambda_{c},c\in\mathcal{C}\left(  v\right)  \right\}  }.$
\end{itemize}

\noindent In such a case we clearly have the relations
\begin{equation}
\lambda_{c^{\prime}}=\sum_{v\in V}\sum_{c\in\mathcal{C}\left(  v\right)
}\lambda_{c}P\left[  \left(  v,c\right)  ,\left(  v^{\prime},c^{\prime
}\right)  \right]  , \label{01}%
\end{equation}
where $c^{\prime}\in\mathcal{C}\left(  v^{\prime}\right)  .$ They define the
set $\bar{\lambda}=\left\{  \lambda_{c},c\in\bar{V}\right\}  $ of the rates of
the Poisson inflows up to a common factor, $\alpha.$ Note that the expected
number of customers, $N=N\left(  \bar{\lambda}\right)  $, present in the
network $G$ in the state $\prod_{v\in V}\chi^{\left\{  \lambda_{c}%
,c\in\mathcal{C}\left(  v\right)  \right\}  }$, considered as a function of
$\alpha,$ i.e. $N\left(  \alpha\bar{\lambda}\right)  ,$ is continuous strictly
increasing in $\alpha$ once $\bar{\lambda}\not \equiv 0.$ (Here we need the
monotonicity property of the service discipline $R\left(  v\right)  $).
Therefore the rates $\bar{\lambda}^{\rho}$ are uniquely defined by the
relations $\left(  \ref{01}\right)  ,$ supplemented by the equation $N\left(
\bar{\lambda}\right)  =\rho.$ The stationary distribution of the queues,
corresponding to the above Poisson inflows $\bar{\lambda}^{\rho}$ will be
denoted by $\chi^{\rho}:$%
\begin{equation}
\chi^{\rho}=\prod_{v\in V}\chi^{\left\{  \lambda_{c}^{\rho},c\in
\mathcal{C}\left(  v\right)  \right\}  }. \label{05}%
\end{equation}
To save on notation, we will consider below the case when the service
discipline is FIFO with priorities. To describe it we need to endow every set
of colors $\mathcal{C}\left(  v\right)  $ with priority relation
$\succcurlyeq,$ which is a linear order, and we say that the client of type
$c_{1}$ has higher priority than $c_{2}$ iff $c_{2}\succcurlyeq c_{1}.$ At
every moment when there are several clients waiting at a server for the
service, the client with the highest priority is served. If there are several
such clients, they are served in the order they came to the server. The
priorities are respected to such an extent that even when a high priority
client arrives at the time when a low priority client is under the service,
then the service of the latter is interrupted, and is resumed only after the
former one leaves the node. The (notational) advantage of such discipline is
that the queue can be described by just a vector in $\Omega=\mathbb{Z}%
^{\left\vert \mathcal{C}\left(  v\right)  \right\vert }.$ But otherwise our
proof can be carried out literally.

Due to the symmetry of our network we can consider all the measures $\nu
_{M}^{\rho}\left(  t\right)  $ and $\pi_{M}^{\rho}$ to be the measures on the
same space $\mathcal{M}\left(  \Omega\right)  ,$ interpreting them as the
frequency of a given state of a given server in the network $\mathcal{G}%
_{M}\left(  \bar{G}\right)  $. Thus, $\nu_{M}^{\rho}\left(  t\right)  $ and
$\pi_{M}^{\rho}$ are elements of $\mathcal{M}\left(  \mathcal{M}\left(
\Omega\right)  \right)  .$ However, the limiting measure $\nu_{\infty}^{\rho
}\left(  t\right)  ,$ unlike all the other measures above, is again the
element of $\mathcal{M}\left(  \Omega\right)  .$ The process $\nu_{\infty
}^{\rho}\left(  t\right)  $ is a Non-Linear Markov Process (NLMP). See
\cite{RSh1} for more details.

Using these notations, the above claims mean that the convergence $\nu
_{M}^{\rho}\left(  t\right)  \rightarrow\pi_{M}^{\rho}$ is uniform in $M,$ and
that $\pi_{M}^{\rho}\rightarrow\chi^{\rho}$ as $M\rightarrow\infty.$ The
measure $\chi^{\rho}$ is called \textquotedblleft the Poisson Hypothesis
behavior\textquotedblright.

\section{The main result}

Suppose we are given the elementary network $\left(  \bar{G},\gamma\right)  ,$
which is connected in the sense of Condition \ref{csc}. We suppose that every
server $v\in V$ is supplied with the conservative service rule $R\left(
v\right)  $ (which allows the server to decide the order in which the queuing
clients are served). Consider now the sequence $\nabla_{M}^{\rho}$ of Markov
process on the networks $\mathcal{G}_{M}\left(  \bar{G}\right)  ,$
$M=1,2,...,$ populated by $N=\lfloor\rho M\rfloor$ clients. Let $\pi_{M}%
^{\rho}$ stands for its invariant measure.

\begin{theorem}
\label{t1} There exists the value $\rho=\rho\left(  \bar{G},\gamma,P\right)
,$ such that for any $\rho<\rho\left(  \bar{G},\gamma.P\right)  $ the weak
limit $\lim_{M\rightarrow\infty}\pi_{M}^{\rho}$ exists and is equal to the
measure $\chi^{\rho}.$
\end{theorem}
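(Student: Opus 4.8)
The plan is to establish the result through a two-scale analysis: first understand the infinite-size limit (the NLMP $\nu_\infty^\rho(t)$) and show it has $\chi^\rho$ as a globally attracting fixed point at low load, and then transfer this to the finite networks via a propagation-of-chaos estimate that is uniform in $M$. Let me sketch both pieces.

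For the infinite-size dynamics, I would first derive the self-consistent evolution equation for $\nu_\infty^\rho(t) \in \mathcal{M}(\Omega)$. At the mean-field limit, each server sees incoming flows whose rates are determined by the current state of the ensemble, so the flow $\lambda_c(t)$ into a color-$c$ queue is itself a functional of $\nu_\infty^\rho(t)$ via the analogue of $(\ref{01})$. This closes the dynamics into a nonlinear semigroup on $\mathcal{M}(\Omega)$. The key structural observation is that $\chi^\rho$ defined in $(\ref{05})$ is a fixed point of this semigroup — this follows because the Poisson inflows $\bar\lambda^\rho$ satisfying $(\ref{01})$ together with $N(\bar\lambda)=\rho$ reproduce themselves as output rates, by the standard stationarity of an $M/G/1$-type queue fed by a constant-rate Poisson stream. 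The heart of the matter is then a \emph{contraction} argument: I would introduce a suitable metric on $\mathcal{M}(\Omega)$ (a weighted total-variation or Wasserstein-type distance tracking queue lengths) and show that for $\rho$ below a threshold $\rho(\bar G,\gamma,P)$, the nonlinear map $\nu \mapsto \nu(t)$ strictly contracts toward $\chi^\rho$. The monotonicity hypothesis on $R(v)$ is what makes $N(\alpha\bar\lambda)$ monotone in $\alpha$ and should also give a coupling/monotonicity argument controlling how a perturbation of the inflow propagates to a perturbation of the queue-length distribution. At low load the queues are typically short, the servers are mostly idle, and so the feedback gain of the nonlinear map is small — this is the quantitative mechanism behind contraction, and it is why a threshold on $\rho$ appears.

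For the finite networks, I would then show that $\nu_M^\rho(t)$ shadows the NLMP $\nu_\infty^\rho(t)$ with an error that vanishes as $M \to \infty$ uniformly on $[0,T]$ for each fixed $T$. This is a propagation-of-chaos estimate: because the routing matrix $P_M$ spreads each outgoing flow across $M$ copies with weight $1/M$, the inflow to any single server is an average of weakly-dependent contributions, and a law-of-large-numbers argument (controlling the fluctuation of the empirical flow around its mean) gives convergence of the empirical measure to the solution of the nonlinear equation. Combining the two pieces gives the theorem: for any $\varepsilon$, pick $T$ large enough that the contracting NLMP has entered the $\varepsilon$-ball around $\chi^\rho$ (possible uniformly because $\chi^\rho$ is globally attracting at low load), then pick $M$ large so that $\nu_M^\rho(T)$ is within $\varepsilon$ of $\nu_\infty^\rho(T)$; since the finite process is already close to its own invariant measure $\pi_M^\rho$ on this time scale (by the uniform-in-$M$ relaxation that the contraction provides), one concludes $\pi_M^\rho \to \chi^\rho$ weakly.

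The main obstacle I anticipate is proving the contraction for the NLMP with a genuinely $M$-independent rate and threshold $\rho(\bar G,\gamma,P)>0$. Two technical difficulties arise here. First, the state space $\Omega=\mathbb{Z}^{|\mathcal{C}(v)|}$ is unbounded, so one must control the tails of the queue-length distribution to even have a well-defined metric and a compact-enough setting; this likely requires a Lyapunov function (an exponential moment $\sum_x e^{\kappa k(x)} \nu(x)$) that the dynamics keeps uniformly bounded at low load. Second, the interruption/priority structure of the FIFO-with-priorities discipline means the single-server map $\{\lambda_c\} \mapsto \chi^{\{\lambda_c\}}$ is not given by a simple closed form, so establishing its Lipschitz dependence on the inflow rates — with a Lipschitz constant that, composed with the routing matrix $P$, has spectral radius below $1$ for small $\rho$ — is the delicate analytic core. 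I expect the connectedness Condition \ref{csc} to enter precisely at this point, guaranteeing that the linearized feedback operator is a genuine stochastic-type operator whose relevant norm can be pushed below unity by taking $\rho$ small.
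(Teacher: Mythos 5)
Your outer skeleton coincides with the paper's: exponential-moment (Lyapunov) bounds giving compactness of $\{\pi_M^{\rho}\}$, convergence of generators giving finite-time convergence $\nu_M^{\rho}(t)\to\nu_\infty^{\rho}(t)$, and identification of accumulation points of $\pi_M^{\rho}$ with the (unique) stationary state of the NLMP via the Khasminsky theorem. But the analytic core of your argument --- the contraction --- fails as you propose it. You want the linearized feedback operator (single-server map $\{\lambda_c\}\mapsto\chi^{\{\lambda_c\}}\mapsto$ output rates, composed with the routing matrix $P$) to have norm below $1$ for small $\rho$. It cannot: for any conservative discipline, the stationary throughput of each color equals its inflow rate, so this feedback map has Perron eigenvalue exactly $1$, with positive eigenvector $\bar\lambda^{\rho}$, \emph{for every} $\rho>0$. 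This is just flow conservation, and it is already visible in the paper's remark that the balance equations $\left(\ref{01}\right)$ determine $\bar\lambda$ only up to the scalar $\alpha$, which must be fixed by the conservation law $N(\bar\lambda)=\rho$. The neutral direction corresponding to the conserved number of customers blocks any Banach-fixed-point argument of the kind you sketch, no matter how small $\rho$ is. A second, related degeneracy: as $\rho\to0$ the fixed point $\chi^{\rho}$ collapses to the empty-queue Dirac mass, so a contraction estimate in an unnormalized metric on $\mathcal{M}(\Omega)$ with $\rho$-independent constants degenerates together with it --- both trajectories trivially approach $\delta_{\mathbf{0}}$, which says nothing about the proportions in which the flows equilibrate.

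The missing idea --- and the paper's actual device --- is a change of variables that quotients out the conserved quantity before linearizing: the \emph{derived process} $\mu'(x)=\left\vert x\right\vert\mu_v(x)/N(\mu)$ on $\mathcal{Z}$, the queue as seen by a typical customer. In these coordinates the evolution takes the explicitly perturbative form $\dot\mu=G\mu+\rho H(\mu)$ of $\left(\ref{E5}\right)$, where $G$ generates a genuine \emph{linear} ergodic Markov process --- the single-client dynamics, whose ergodicity is precisely Condition \ref{csc}, with rates $\left(\ref{02}\right)$--$\left(\ref{03}\right)$ --- and the nonlinearity carries an explicit factor $\rho$. The base contraction rate $\beta$ comes from this linear process (via the integrated norm $\left\Vert\nu\right\Vert_1=\int_0^{\infty}e^{\beta t}\left\Vert\nu^l(t)\right\Vert dt$ of $\left(\ref{E7}\right)$, defined on the zero-mass subspace $Y_0$, which makes the contraction infinitesimal) and is \emph{independent of} $\rho$; the threshold $\rho(\bar G,\gamma,P)$ then arises from the Lipschitz bound $\left(\ref{E8}\right)$ on the quadratic term via $-\beta+\rho C_1(\cdot)<0$. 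So smallness of $\rho$ tames the nonlinearity, not the feedback gain --- your intuition ``servers mostly idle $\Rightarrow$ small gain'' locates the small parameter in the wrong place. Uniqueness in the original coordinates is then recovered from the observation that a state of prescribed load $\rho$ is determined by its derivative. Without this (or an equivalent) factoring-out of the conservation law, your Lipschitz/spectral-radius step stalls at the eigenvalue $1$, and the proof does not close.
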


Our next result deals with the convergence to the stationary states $\pi
_{M}^{\rho}.$ For that, we have to specify the class of the initial states of
our Markov processes.

\begin{theorem}
\label{t2} Suppose we start the Markov process $\nabla_{M}^{\rho}$ in the
state $\nu_{M}^{\rho}\left(  0\right)  ,$ which has the property that at every
node $v\in\mathcal{V}_{M}$ the queue length $k_{v}=k\left(  v,t=0\right)  $
satisfies
\begin{equation}
\int\exp\left\{  \varkappa k_{v}\right\}  ~d\nu_{M}^{\rho}\left(  0\right)
\leq K, \label{92}%
\end{equation}
where $\varkappa>0,K$ are some constants, defined below. Suppose that
$\rho<\rho\left(  \bar{G},\gamma\right)  .$ Then for every local function $f$
we have
\[
\left\vert \int f~d\nu_{M}^{\rho}\left(  t\right)  -\int f~d\pi_{M}^{\rho
}\right\vert \leq\left\Vert f\right\Vert \exp\left\{  -\tau t\right\}  ,
\]
where the constant $\tau=\tau\left(  \varkappa,K\right)  $ does not depend on
$M$.
\end{theorem}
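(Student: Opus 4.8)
The plan is to prove exponential relaxation (Theorem~\ref{t2}) by combining a uniform-in-$M$ control on the propagation of chaos with a Lyapunov/contraction argument for the limiting nonlinear Markov process $\nu_\infty^\rho(t)$. The key structural fact at low load is that the effective arrival rates $\bar\lambda^\rho$ solving (\ref{01}) with $N(\bar\lambda)=\rho$ are small when $\rho<\rho(\bar G,\gamma,P)$, so each single-server queue is subcritical with a uniform margin. Concretely, I would first establish, using the monotonicity property of the discipline $R(v)$ and the exponential moment bound (\ref{92}), that the exponential moment $\int\exp\{\varkappa k_v\}\,d\nu_M^\rho(t)$ stays bounded by a constant $K'$ uniformly in $M$ and $t$. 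This a~priori tightness is what prevents mass from escaping to infinite queue length and is the ingredient that makes the whole scheme quantitative; the choice of $\varkappa$ is dictated by the subcriticality gap, i.e. $\varkappa$ must be small enough that the service rate dominates the arrival rate in the tilted generator.

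Second, I would pass to the mean-field limit and study the NLMP $\nu_\infty^\rho(t)\in\mathcal M(\Omega)$, whose generator depends on the current measure only through the induced arrival rates $\lambda_c(t)=\lambda_c(\nu_\infty^\rho(t))$ determined by (\ref{01}). The goal here is to show that $\chi^\rho$ is the unique fixed point of this nonlinear dynamics and that it is globally attracting with an exponential rate $\tau$. I would set up a contraction in a weighted total-variation (or weighted Wasserstein) norm on $\mathcal M(\Omega)$, with weight $\exp\{\varkappa k\}$, and show that the map sending an input arrival-rate profile to the output departure-rate profile (mediated by the stationary and transient single-server queues) is a strict contraction when $\rho$ is small. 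The low-load hypothesis enters precisely by making the derivative of the throughput map with respect to $\bar\lambda$ have norm bounded below $1$: at small $\lambda$ the queues are almost always empty, so the output flow tracks the input flow nearly linearly through $P$, whose spectral action is controlled by Condition~\ref{csc} (ergodicity of $P$ gives a spectral gap on the mean-zero subspace).

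Third, I would glue the two pieces together. The uniform propagation-of-chaos estimate lets me approximate the finite-$M$ evolution of any local function $f$ by its NLMP counterpart with an error that is $o(1)$ in $M$ but, more importantly, that does not accumulate in $t$ thanks to the contraction. Combining the exponential decay $|\int f\,d\nu_\infty^\rho(t)-\int f\,d\chi^\rho|\le\|f\|\exp\{-\tau t\}$ for the limit with the fact that $\pi_M^\rho\to\chi^\rho$ (Theorem~\ref{t1}) and with the $M$-uniform closeness of $\nu_M^\rho(t)$ to $\nu_\infty^\rho(t)$, I obtain the stated bound with $\tau=\tau(\varkappa,K)$ independent of $M$. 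The constants $K,\varkappa$ are exactly those for which the a~priori moment bound propagates and the contraction rate stays positive.

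The main obstacle, I expect, is the second step: proving that the throughput map is a genuine contraction \emph{uniformly over the whole trajectory}, not merely near the fixed point. Linearizing at $\chi^\rho$ is easy, but the initial states allowed by (\ref{92}) can have queues far from equilibrium, so I need a global Lyapunov function rather than a local spectral estimate. The natural candidate is the relative entropy (or the weighted $L^1$ distance) between $\nu_\infty^\rho(t)$ and $\chi^\rho$, and the crux is to show its time derivative is bounded above by $-\tau$ times itself. This requires controlling the non-Poissonian feedback of the departure process into the arrival process; the smoothing remark in the Poisson Hypothesis (that the exit rate $\gamma_{\mathcal N}(T)$ is smoother than $\lambda_{\mathcal N}(T)$) is the heuristic that must be turned into a rigorous contraction, and doing so at low load — where the smoothing is strong because servers are rarely congested — is where the condition $\rho<\rho(\bar G,\gamma,P)$ does the essential work.
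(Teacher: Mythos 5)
Your outer scaffolding matches the paper's: the uniform-in-$M$, uniform-in-$t$ propagation of the exponential moment bound (\ref{92}) is the paper's compactness lemma, and your gluing step --- finite-time convergence of the finite-$M$ processes to the NLMP, Khasminsky's theorem for $\pi_M^\rho\to\pi_\infty^\rho$, and a triangle inequality obtained by restarting the NLMP from $\nu_M^\rho(t-T)$ --- is exactly how the paper reduces Theorem \ref{t2} to exponential ergodicity of the NLMP. The gap is in your second step, which is the heart of the matter, and it is twofold. First, a quantitative claim that fails as stated: you want the derivative of the throughput map (arrival-rate profile $\mapsto$ departure-rate profile) to have norm strictly below $1$ at low load. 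But as $\rho\to 0$ the queues empty out and this derivative tends to the routing matrix $P$ itself, which is stochastic and has spectral radius exactly $1$; the Perron direction corresponds to rescaling the total flow, which in a closed network is pinned by the conserved number of customers rather than contracted by the dynamics. Ergodicity of $P$ (Condition \ref{csc}) gives a gap only on the mean-zero subspace, so any contraction must be set up modulo the conservation law --- and, worse, the NLMP state is the full queue-length distribution, not the finite-dimensional rate vector, so a fixed-point argument on rate profiles cannot close on its own: $\lambda_c(\nu)$ does not determine $\nu$.

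Second, and more fundamentally, you are missing the paper's key idea, which is also what resolves the obstacle you yourself flag (global versus merely local contraction). In the original coordinates the low-load limit is degenerate --- all trajectories collapse toward the empty-queue measure $\delta_{\mathbf{0}}$ --- so weighted total-variation or relative-entropy estimates on $\nu$ see a vanishing signal and yield no uniform rate. The paper instead passes to the \emph{derived} (size-biased, per-customer) process $\mu'(x)\propto\left\vert x\right\vert \mu_v(x)$ on $\mathcal{Z}$. In these coordinates the dynamics takes the form $\dot\mu=G\mu+\rho H(\mu)$, see (\ref{E5}): the quadratic term carries an explicit factor $\rho$, and the $\rho=0$ part is a genuine \emph{linear} ergodic Markov process, namely the single-client walk of Condition \ref{csc}, with rates (\ref{02})--(\ref{03}). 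Exponential ergodicity of that linear process is converted into the adapted norm $\left\Vert\cdot\right\Vert_1$ of (\ref{E7}), which contracts infinitesimally on the mean-zero subspace $Y_0$; the nonlinearity is Lipschitz in this norm with constant proportional to $\Vert\mu\Vert+\Vert\nu\Vert$ (\ref{E8}); and the a priori bound $\Vert\mu(t)\Vert\leq C_2K$ makes the Gronwall inequality (\ref{E3}) hold along the \emph{whole} trajectory, not just near $\chi^\rho$ --- no entropy production estimate is needed. The final point, recovering $\nu$ from $\mu'$, uses precisely the conservation law you would otherwise have had to quotient by: $\varkappa(t)$ is the unique measure of load $\rho$ with the given derivative. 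Without the derived process (or some equivalent renormalization of the trivial low-load limit), your step two names the right heuristic --- strong smoothing at low load --- but provides no mechanism that could make it rigorous.
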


\section{Proof}

The plan of the proof is the following. First, we establish the exponential
convergence of the limiting process, $\nu_{\infty}^{\rho}\left(  t\right)  ,$
which is NLMP, to its limiting state. We will see that the limiting state --
$\chi^{\rho}$ -- is unique, and coincides with the Poisson point. Since by the
Khasminsky theorem -- see Theorem 1.2.14 in \cite{L} -- any accumulation point
of the sequence $\pi_{M}^{\rho}$ is a stationary measure of $\nu_{\infty
}^{\rho}\left(  t\right)  ,$ while the family $\pi_{M}^{\rho},$ $M=1,2,...$ is
compact, that will prove all our claims.

In this paper we present a part of the proof of our theorems for a concrete
network studied in \cite{RShV1}. The generalization to other networks is straightforward.

The graph $G$ in \cite{RShV1} has three vertices, $\bar{O},\bar{A}$ and
$\bar{B}.$ The clients at $\bar{O}$ have no priorities, i.e. $\left\vert
\mathcal{C}\left(  \bar{O}\right)  \right\vert =1,$ while the clients at
$\bar{A}$ or $\bar{B}$ are of two types, i.e. $\left\vert \mathcal{C}\left(
\bar{A}\right)  \right\vert =\left\vert \mathcal{C}\left(  \bar{B}\right)
\right\vert =2.$ In notations of \cite{RShV1}, $\mathcal{C}\left(  \bar
{A}\right)  =\left\{  A,BA\right\}  ,$ $\mathcal{C}\left(  \bar{B}\right)
=\left\{  B,AB\right\}  ,$ and $BA$-clients have priority over $A$-clients (at
$\bar{A}$), while $AB$-clients -- over $B$-clients (at $\bar{B}$). The rates
$\gamma\left(  \cdot,\cdot\right)  $ are the following:%
\begin{align}
\gamma\left(  \bar{O}\right)   &  \equiv\gamma_{O}=3,\label{90}\\
\gamma\left(  \bar{A},BA\right)   &  \equiv\gamma_{BA}=\gamma\left(  \bar
{B},AB\right)  \equiv\gamma_{AB}=2,\nonumber\\
\gamma\left(  \bar{A},A\right)   &  \equiv\gamma_{A}=\gamma\left(  \bar
{B},B\right)  \equiv\gamma_{B}=10.\nonumber
\end{align}
The matrix $P$ is given by%
\begin{equation}
P=\left[
\begin{array}
[c]{cccccc}
& O & A & BA & B & AB\\
O & o & \frac{1}{2} & o & \frac{1}{2} & o\\
A & o & o & o & o & 1\\
BA & 1 & o & o & o & o\\
B & o & o & 1 & o & o\\
AB & 1 & o & o & o & o
\end{array}
\right]  . \label{91}%
\end{equation}
It is clearly ergodic.

\subsection{Compactness}

We first prove the following compactness statement.

\begin{lemma}
Suppose all the initial states $\nu_{M}^{\rho}\left(  0\right)  $ of our
processes satisfy the condition $\left(  \ref{92}\right)  ,$ $M=1,2,...$ .
Suppose that the load $\rho$ is small enough. Then there exist two values
$\varkappa^{\prime}$ and $K^{\prime},$ such that for all nodes $v,$ all $t$
and $M$
\begin{equation}
\int\exp\left\{  \varkappa^{\prime}k_{v}\right\}  ~d\nu_{M}^{\rho}\left(
t\right)  \leq K^{\prime}. \label{94}%
\end{equation}

\end{lemma}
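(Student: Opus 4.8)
The plan is to run a single-site Lyapunov (drift) argument for the exponential moment, powered by the elementary but decisive observation that at low load the total inflow rate into any one node is uniformly small. Fix a node $v$ and take as test function $F\left(\omega\right)=\exp\left\{\varkappa' k_{v}\right\}$, which depends on the global configuration $\omega$ only through the queue length $k_{v}$ at $v$. I would compute the action of the generator $\mathcal{L}_{M}$ of $\nabla_{M}^{\rho}$ on $F$ and establish a drift inequality of the form $\mathcal{L}_{M}F\le-\delta F+C$ with constants $\delta,C>0$ independent of $M$. Integrating against $\nu_{M}^{\rho}\left(t\right)$ and using Dynkin's formula then gives the differential inequality $\frac{d}{dt}\int F\,d\nu_{M}^{\rho}\left(t\right)\le-\delta\int F\,d\nu_{M}^{\rho}\left(t\right)+C$, whose solution is bounded for all $t$ by $\max\left\{\int F\,d\nu_{M}^{\rho}\left(0\right),\,C/\delta\right\}$, uniformly in $M$, which is exactly the assertion $\left(\ref{94}\right)$.

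The heart of the matter is the inflow bound. Since the network is \emph{closed}, there are exactly $N=\lfloor\rho M\rfloor$ clients, hence at most $N$ busy servers at any instant, each completing service at rate at most $\gamma_{\max}=\max_{v,c}\gamma\left(v,c\right)$. Because of the mean-field normalization $P_{M}=\frac{1}{M}P$, each service completion is routed into our fixed node $v$ with probability at most $\frac{1}{M}$. Summing over the busy servers shows that the instantaneous arrival rate into $v$ is at most $\frac{N\gamma_{\max}}{M}\le\rho\gamma_{\max}=:\Lambda$, \emph{independently of $M$ and of the current configuration}. On the other hand, under the conservative discipline the server $v$ always serves, whenever $k_{v}>0$, exactly one client at rate at least $\gamma_{\min}=\min_{v,c}\gamma\left(v,c\right)>0$, so $k_{v}$ decreases by one at rate $\ge\gamma_{\min}$. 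The low-load hypothesis enters here as $\Lambda<\gamma_{\min}$, i.e. $\rho<\gamma_{\min}/\gamma_{\max}$, which I take as the quantitative meaning of ``$\rho$ small enough''.

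With $g\left(k\right)=e^{\varkappa' k}$ and the two elementary moves $k\mapsto k\pm1$ of $k_{v}$, for $k_{v}>0$ the bounds above give $\mathcal{L}_{M}F\le\left[\Lambda\left(e^{\varkappa'}-1\right)+\gamma_{\min}\left(e^{-\varkappa'}-1\right)\right]F$, where I have used $e^{\varkappa'}-1>0$ with the upper bound on the arrival rate and $e^{-\varkappa'}-1<0$ with the lower bound on the service rate. Since $\Lambda<\gamma_{\min}$, the bracket behaves like $\left(\Lambda-\gamma_{\min}\right)\varkappa'<0$ as $\varkappa'\to0$, so I fix $\varkappa'$ small enough (and also $\varkappa'\le\varkappa$) that the bracket equals some $-\delta<0$. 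For $k_{v}=0$ the departure term vanishes and $\mathcal{L}_{M}F\le\Lambda\left(e^{\varkappa'}-1\right)$, a constant, whence $\mathcal{L}_{M}F\le-\delta F+C$ globally with $C=\Lambda\left(e^{\varkappa'}-1\right)+\delta$.

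To close, note that $\varkappa'\le\varkappa$ and $k_{v}\ge0$ force $e^{\varkappa' k_{v}}\le e^{\varkappa k_{v}}$, so condition $\left(\ref{92}\right)$ gives $\int F\,d\nu_{M}^{\rho}\left(0\right)\le K$; the Gronwall step then yields $\left(\ref{94}\right)$ with $\varkappa'$ as chosen and $K'=\max\left\{K,\,C/\delta\right\}$. The generator computation and the Gronwall estimate are routine; the one genuinely important step — and the only place where both the closedness of the network and the mean-field scaling $P_{M}=\frac{1}{M}P$ are used — is the uniform bound $\Lambda=\rho\gamma_{\max}$ on the per-node inflow rate. The one technical nuisance I anticipate is that, under FIFO-with-priorities, the state at $v$ is the vector $x_{v}\in\mathbb{Z}^{\left\vert\mathcal{C}\left(v\right)\right\vert}$ rather than a scalar; here I would invoke the conservative property together with the monotonicity assumption to reduce to the total length $k_{v}=\sum_{c}x_{v}\left(c\right)$, which performs a birth--death motion with birth and death rates bounded exactly as above, so that the scalar drift computation applies verbatim.
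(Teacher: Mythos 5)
Your proposal is correct and follows essentially the same route as the paper: the paper's entire proof consists of your key observation — that the inflow rate of $\left(v_{1},c_{1}\right)\rightarrow\left(v_{2},c_{2}\right)$ clients into any fixed server is bounded by $\rho\,\gamma\left(v,c\right)P\left[\left(v_{1},c_{1}\right),\left(v_{2},c_{2}\right)\right]$, i.e.\ your uniform bound $\Lambda=\rho\gamma_{\max}$ coming from closedness and the $\frac{1}{M}$ mean-field routing — followed by the assertion that this implies $\left(\ref{94}\right)$ for $\rho$ small, and your exponential Lyapunov function with the drift/Gronwall computation is precisely the standard argument that assertion tacitly invokes (monotonicity is not even needed there, only conservativeness). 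One micro-adjustment: if $\bar{G}$ allows routing back to the same node, a service completion at $v$ returns to $v$ itself with probability at most $\frac{1}{M}$, so the effective departure rate is $\gamma_{\min}\left(1-\frac{1}{M}\right)$ rather than $\gamma_{\min}$; this only shrinks your explicit threshold on $\rho$ (e.g.\ to $\rho<\gamma_{\min}/\left(2\gamma_{\max}\right)$ for $M\geq2$, the case $M=1$ being trivial at small $\rho$) and leaves the conclusion intact.
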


\begin{proof}
Note that the rate of the flow of the clients of type $c_{1}$ from the server
$v_{1}$ to the server $v_{2},$ where they turn into the clients of type
$c_{2},$ is bounded from above by $\rho\gamma\left(  v,c\right)  P\left[
\left(  v_{1},c_{1}\right)  ,\left(  v_{2},c_{2}\right)  \right]  .$ That
implies $\left(  \ref{94}\right)  $ for $\rho$ small enough.
\end{proof}

Suppose we know the convergence of the generators of the Markov processes
$\nabla_{M}^{\rho}$ to that of the limiting Non-Linear Markov Process
$\nabla_{\infty}^{\rho}.$ That implies the convergence of the processes
$\nabla_{M}^{\rho}$ to $\nabla_{\infty}^{\rho}$ on any finite time interval.
Suppose also that we can show that the NLMP $\nabla_{\infty}^{\rho}$ is
ergodic -- i.e. for every its initial state $\nu_{\infty}^{\rho}\left(
0\right)  $ satisfying $\left(  \ref{92}\right)  $ we have the convergence to
the (unique) limiting state, $\nu_{\infty}^{\rho}\left(  t\right)
\rightarrow\pi_{\infty}^{\rho},$ which is uniform in the choice of
$\nu_{\infty}^{\rho}\left(  0\right)  .$ Then, first of all, we have the
convergence $\pi_{M}^{\rho}\rightarrow\nu_{\infty}^{\rho}\left(
\infty\right)  $ as $M\rightarrow\infty.$ Indeed, the family $\pi_{M}^{\rho}$
is compact, due to $\left(  \ref{94}\right)  .$ Since by the Khasminsky
theorem -- see Theorem 1.2.14 in \cite{L} -- any accumulation point of the
sequence $\pi_{M}^{\rho}$ is a stationary measure of $\nabla_{\infty}^{\rho},$
while the latter is unique, the convergence follows.

Let us check that the convergence $\nu_{M}^{\rho}\left(  t\right)
\rightarrow\pi_{M}^{\rho}$ is uniform in $M.$ Let $\varepsilon>0,$ and let
$T=T\left(  \varepsilon\right)  $ be the time for which the estimate
$\rho_{KROV}\left(  \nu_{\infty}^{\rho}\left(  T\right)  ,\pi_{\infty}^{\rho
}\right)  <\varepsilon$ holds for any initial condition $\nu_{\infty}^{\rho
}\left(  0\right)  $ satisfying $\left(  \ref{92}\right)  .$ We will show that
for all $M$ large enough and for all $t>T\left(  \varepsilon\right)  $ we
have
\begin{equation}
\rho_{KROV}\left(  \nu_{M}^{\rho}\left(  t\right)  ,\pi_{M}^{\rho}\right)
<3\varepsilon. \label{95}%
\end{equation}
Indeed, the state $\nu_{M}^{\rho}\left(  t-T\right)  $ satisfies $\left(
\ref{94}\right)  ,$ and therefore the evolution $\nabla_{\infty}^{\rho},$
applied to it for time $T$, results in a state $\tilde{\nu}_{M}^{\rho}\left(
t\right)  ,$ satisfying $\rho_{KROV}\left(  \tilde{\nu}_{M}^{\rho}\left(
t\right)  ,\pi_{\infty}^{\rho}\right)  <\varepsilon.$ But for $M$ large enough
$\rho_{KROV}\left(  \tilde{\nu}_{M}^{\rho}\left(  t\right)  ,\nu_{M}^{\rho
}\left(  t\right)  \right)  <\varepsilon,$ $\rho_{KROV}\left(  \pi_{M}^{\rho
},\pi_{\infty}^{\rho}\right)  <\varepsilon,$ so $\left(  \ref{95}\right)  $ follows.

So to prove our theorems it remains to show that for any initial state
$\nu_{\infty}^{\rho}\left(  0\right)  $ of the NLMP satisfying $\left(
\ref{92}\right)  $ we have the convergence $\nu_{\infty}^{\rho}\left(
t\right)  $ to $\chi^{\rho}.$

\subsection{The derived process}

We want to study our particle system with many nodes $M,$ or even with
infinite number of them, in the limit as the number of particles, $N,$ or the
ratio $\rho=\frac{N}{M}$ -- goes to zero. Understood literally, the limiting
object is trivial. We want a non-trivial version of it. The object we
construct is very similar to the derivative of the function at the point where
it vanishes, so we look not on the function, but on the properties of its increments.

We will construct this derived process for the case of $M=\infty,$ i.e. for
the NLMP. It can be defined in the following way. Let first $M$ is finite, and
$N$ has some value. Let $\mu_{M,\rho}$ be a state of our Markov process. For
every configuration $\sigma$ of our process we can define a measure
$v_{\sigma}$ on $\mathcal{Z}=\cup_{v\in V}\left(  \mathbb{Z}_{+}^{\left\vert
\mathcal{C}\left(  v\right)  \right\vert }\setminus\mathbf{0}\right)  $ -- the
disjoint union of lattices \textit{without the origins} -- which is the
distribution of \textquotedblleft the queue of the average
customer\textquotedblright. Namely, for every client $c$, forming $\sigma,$ we
consider a point $x_{c}$ in $\mathcal{Z},$ by first taking the server
$v\left(  c\right)  ,$ corresponding to $c,$ and then marking the point
$x_{c}\in\mathbb{Z}_{+}^{\left\vert \mathcal{C}\left(  v\right)  \right\vert
}\setminus\mathbf{0,}$ describing the queue at this server. Then $v_{\sigma
}=\frac{1}{N}\sum_{c\in\sigma}\delta_{x_{c}}.$ We define the derived process
$\mu_{M,\rho}^{\prime}$ by
\[
\mu_{M,\rho}^{\prime}=\int v_{\sigma}~d\mu_{M,\rho}\left(  \sigma\right)  .
\]
Clearly, we can take the limit of $\mu_{M,\rho}^{\prime}$ as $M\rightarrow
\infty,$ keeping $\rho$ fixed. The limiting process -- the \textit{derivative
-- }$\mu_{\infty,\rho}^{\prime}$ will be also NLMP.

It can be described alternatively as follows. The state of the NLMP on the
graph $G=\left(  V,E\right)  $ is represented by the collection of $\left\vert
V\right\vert $ probability measures $\mu_{v},$ $v\in V$, defined,
correspondingly, on lattices $\mathbb{Z}_{+}^{\left\vert \mathcal{C}\left(
v\right)  \right\vert }$. For every $x\in\mathcal{Z}$ we put $\left\vert
x\right\vert =x_{1}+...+x_{d\left(  x\right)  },$ where $d\left(  x\right)  $
is the dimension of $x.$ The derivative is a single probability measure
$\mu^{\prime}$ on the disjoint union $\mathcal{Z}=\cup_{v\in V}\left(
\mathbb{Z}_{+}^{\left\vert \mathcal{C}\left(  v\right)  \right\vert }%
\setminus\mathbf{0}\right)  $ of lattices. It is defined by its density with
respect to various $\mu_{v}$-s: for every $x\in\mathbb{Z}_{+}^{\left\vert
\mathcal{C}\left(  v\right)  \right\vert }\setminus\mathbf{0}\subset
\mathcal{Z}$ we put%
\[
\mu^{\prime}\left(  x\right)  =\frac{1}{N\left(  \mu\right)  }\left\vert
x\right\vert \mu_{v}\left(  x\right)  ,\text{ where }N\left(  \mu\right)
=\sum_{x\in\mathcal{Z}}\mu\left(  x\right)  \equiv\sum_{v\in V}\sum
_{x\in\mathbb{Z}_{+}^{\left\vert \mathcal{C}\left(  v\right)  \right\vert
}\setminus\mathbf{0}}\left\vert x\right\vert \mu_{v}\left(  x\right)  .
\]
Since all our measures have exponential moments, the above normalization is
possible. The recovering of the probability measures $\mu_{v}$ from
$\mu^{\prime}$ can be done only \textquotedblleft up to a
constant\textquotedblright, as usual. We will take care of it below.

In accordance with our choice of variables we introduce the norm
\begin{equation}
\Vert\mu^{\prime}\Vert=\sum_{x\in\mathcal{Z}}\frac{\mu^{\prime}\left(
x\right)  }{\left\vert x\right\vert }e^{|x|}. \label{04}%
\end{equation}
In case of the signed measure $\mu$ on $\mathcal{Z}$, we define $\left\Vert
\mu\right\Vert $ by the same relation, with $\mu\left(  x\right)  $ replaced
by $\left\vert \mu\left(  x\right)  \right\vert .$ Below we are assuming that
all the measures $\mu$ considered are elements of the space $Y$ of measures on
$\mathcal{Z}$ with finite norm $\left\Vert \mu\right\Vert .$

The process $\mu_{\infty,\rho}^{\prime}$ is again a NLMP. Note, however, that
the limit $\mu_{\infty,0}^{\prime}\equiv\lim_{\rho\rightarrow0}\mu
_{\infty,\rho}^{\prime}$ is a usual (linear) Markov process on $\mathcal{Z}$.
It is defined by the following jump rates. Let $v$ be a server, and
$x\in\mathbb{Z}_{+}^{\left\vert \mathcal{C}\left(  v\right)  \right\vert
}\subset\mathcal{Z},$ $x=\left(  x_{1},...,x_{\left\vert \mathcal{C}\left(
v\right)  \right\vert }\right)  .$ We will suppose that the coordinates are
listed in the priority order, so if $x=\left(  x_{1},...,x_{k-1}%
,x_{k},0,...,0\right)  ,$ with $x_{k}>0,$ then after a single act of service
the queue will be $\bar{x}=\left(  x_{1},...,x_{k-1},x_{k}-1,0,...,0\right)
.$ The rate $c_{x\rightarrow\bar{x}}$ of the jump $x\rightarrow\bar{x}$ is
then%
\begin{equation}
c_{x\rightarrow\bar{x}}=\frac{\left\vert x\right\vert -1}{\left\vert
x\right\vert }\sum_{i,j}\gamma\left(  v,k\right)  P\left[  \left(  v,k\right)
,\left(  v_{i},c_{j}\right)  \right]  , \label{02}%
\end{equation}
where the sum is taken over all possible outcomes of the service at $v$ of the
client of $k$-th priority. Another possible jump for the process $\mu
_{\infty,0}^{\prime}$ from $x$ is to the cite $\mathbf{1}_{ij}\equiv\left(
0,...,0,1_{j},0,...,0\right)  \in\mathbb{Z}_{+}^{\left\vert \mathcal{C}\left(
v_{i}\right)  \right\vert }\subset\mathcal{Z};$ it happens with the rate%
\begin{equation}
c_{x\rightarrow\mathbf{1}_{ij}}=\frac{1}{\left\vert x\right\vert }%
\gamma\left(  v,k\right)  P\left[  \left(  v,k\right)  ,\left(  v_{i}%
,c_{j}\right)  \right]  . \label{03}%
\end{equation}
From the definitions $\left(  \ref{02}\right)  -\left(  \ref{03}\right)  $ and
our Condition \ref{csc} it follows that the process $\mu_{\infty,0}^{\prime}$
is ergodic, and its stationary distribution coincides with that of the Markov
process on $G,$ corresponding to the \textit{Case of a Single Client.}

\subsection{Dynamics}

The NLMP dynamics \textit{in }$^{\prime}$\textit{-coordinates} is given by a
differential equation
\begin{equation}
\dot{\mu}(t)=F(\mu(t)),\ \mu\left(  0\right)  =\mu_{0}. \label{E1}%
\end{equation}
The right-hand side of (\ref{E1}) is the sum of two terms; the first one is
linear in $\mu,$ and the second one is quadratic:
\begin{equation}
F(\mu)=G\mu+\rho H(\mu). \label{E5}%
\end{equation}
To convince the reader, we will write down these equations for the system
studied in \cite{RShV1}.

The description of the NLMP in \cite{RShV1} was done via probability measure
$\nu_{t}=\left\{  \nu_{t}\left(  x\right)  ,~x\in\mathbb{Z}^{5}\right\}  .$
Its evolution was given by the equation%

\begin{align*}
&  \frac{d\nu_{t}\left(  x\right)  }{dt}=-\nu_{t}\left(  x\right)  \left(
\sum_{a=O,A,B,AB,BA}\lambda_{a}\left(  t\right)  \right) \\
&  -\nu_{t}\left(  x\right)  \left(  \sum_{a=O,AB,BA}\gamma_{a}\left(
1-\delta_{x_{a}}\right)  +\gamma_{A}\left(  1-\delta_{x_{A}}\right)
\delta_{x_{BA}}+\gamma_{B}\left(  1-\delta_{x_{B}}\right)  \delta_{x_{AB}%
}\right) \\
&  +\nu_{t}\left(  x-\Delta_{O}\right)  \left(  1-\delta_{x_{O}}\right)
\left(  \lambda_{AB}\left(  t\right)  +\lambda_{BA}\left(  t\right)  \right)
+\sum_{a=A,B}\nu_{t}\left(  x-\Delta_{a}\right)  \left(  1-\delta_{x_{a}%
}\right)  \frac{\lambda_{O}\left(  t\right)  }{2}\\
&  +\nu_{t}\left(  x-\Delta_{AB}\right)  \left(  1-\delta_{x_{AB}}\right)
\lambda_{A}\left(  t\right)  +\nu_{t}\left(  x-\Delta_{BA}\right)  \left(
1-\delta_{x_{BA}}\right)  \lambda_{B}\left(  t\right) \\
&  +\sum_{a=O,AB,BA}\nu_{t}\left(  x+\Delta_{a}\right)  \mathbf{\gamma}%
_{a}+\nu_{t}\left(  x+\Delta_{A}\right)  \mathbf{\gamma}_{A}\delta_{x_{BA}%
}+\nu_{t}\left(  x+\Delta_{B}\right)  \mathbf{\gamma}_{B}\delta_{x_{AB}},
\end{align*}
where%
\begin{align*}
\lambda_{a}\left(  t\right)   &  =\gamma_{a}\sum_{x:x_{a}>0}\nu_{t}\left(
x\right)  ,\text{ for }a=O,AB,BA,~\\
\lambda_{A}\left(  t\right)   &  =\gamma_{A}\sum_{x:x_{A}>0,x_{BA}=0}\nu
_{t}\left(  x\right)  ,\ \ \lambda_{B}\left(  t\right)  =\gamma_{B}%
\sum_{x:x_{B}>0,x_{AB}=0}\nu_{t}\left(  x\right)  ,
\end{align*}
and the 5D vectors $\Delta_{a}$ are the basis vectors of the lattice
$\mathbb{Z}^{5}.$ In fact, $\nu_{t}=\nu_{t}^{A}\times\nu_{t}^{B}\times\nu
_{t}^{O},$ where the probability measures $\nu_{t}^{A},\nu_{t}^{B}$ are
defined on $\mathbb{Z}_{+}^{2},$ while $\nu_{t}^{O}$ -- on $\mathbb{Z}_{+}%
^{1},$ and their evolution is given by the relations%
\begin{align*}
\frac{d\nu_{t}^{A}\left(  x\right)  }{dt}  &  =-\nu_{t}^{A}\left(  x\right)
\left(  \frac{\lambda_{O}\left(  t\right)  }{2}+\lambda_{B}\left(  t\right)
+\gamma_{BA}\left(  1-\delta_{x_{BA}}\right)  +\gamma_{A}\left(
1-\delta_{x_{A}}\right)  \delta_{x_{BA}}\right) \\
&  +\nu_{t}^{A}\left(  x-\Delta_{A}\right)  \left(  1-\delta_{x_{A}}\right)
\frac{\lambda_{O}\left(  t\right)  }{2}+\nu_{t}^{A}\left(  x-\Delta
_{BA}\right)  \left(  1-\delta_{x_{BA}}\right)  \lambda_{B}\left(  t\right) \\
&  +\nu_{t}^{A}\left(  x+\Delta_{A}\right)  \mathbf{\gamma}_{A}\delta_{x_{BA}%
}+\nu_{t}^{A}\left(  x+\Delta_{BA}\right)  \mathbf{\gamma}_{BA},
\end{align*}%
\begin{align*}
\frac{d\nu_{t}^{B}\left(  x\right)  }{dt}  &  =-\nu_{t}^{B}\left(  x\right)
\left(  \frac{\lambda_{O}\left(  t\right)  }{2}+\lambda_{A}\left(  t\right)
+\gamma_{AB}\left(  1-\delta_{x_{AB}}\right)  +\gamma_{B}\left(
1-\delta_{x_{B}}\right)  \delta_{x_{AB}}\right) \\
&  +\nu_{t}^{B}\left(  x-\Delta_{B}\right)  \left(  1-\delta_{x_{B}}\right)
\frac{\lambda_{O}\left(  t\right)  }{2}+\nu_{t}^{B}\left(  x-\Delta
_{AB}\right)  \left(  1-\delta_{x_{AB}}\right)  \lambda_{A}\left(  t\right) \\
&  +\nu_{t}^{B}\left(  x+\Delta_{B}\right)  \mathbf{\gamma}_{B}\delta_{x_{AB}%
}+\nu_{t}^{B}\left(  x+\Delta_{AB}\right)  \mathbf{\gamma}_{AB},
\end{align*}%
\begin{align*}
&  \frac{d\nu_{t}^{O}\left(  x\right)  }{dt}=-\nu_{t}^{O}\left(  x\right)
\left(  \sum_{a=AB,BA}\lambda_{a}\left(  t\right)  +\gamma_{O}\left(
1-\delta_{x_{O}}\right)  \right) \\
&  +\nu_{t}^{O}\left(  x-\Delta_{O}\right)  \left(  1-\delta_{x_{O}}\right)
\left(  \lambda_{AB}\left(  t\right)  +\lambda_{BA}\left(  t\right)  \right)
+\nu_{t}^{O}\left(  x+\Delta_{O}\right)  \mathbf{\gamma}_{O}.
\end{align*}
The new $^{\prime}$-variables are introduced as follows: $\nu_{t}^{\prime
O}\left(  x\right)  =\frac{\left\vert x\right\vert \nu_{t}^{O}\left(
x\right)  }{\rho}$ (with $x\in\mathbb{Z}_{+}^{1}$), $\nu_{t}^{\prime A}\left(
x\right)  =\frac{\left\vert x\right\vert \nu_{t}^{A}\left(  x\right)  }{\rho
},$ $\nu_{t}^{\prime B}\left(  x\right)  =\frac{\left\vert x\right\vert
\nu_{t}^{B}\left(  x\right)  }{\rho}$ (with $x\in\mathbb{Z}_{+}^{2}$),
$\left\vert x\right\vert \geq1,$ where $\rho=\sum_{x\in\mathbb{Z}_{+}^{5}%
}\left\vert x\right\vert \nu_{t}\left(  x\right)  $ is the number of
particles, (which is conserved). Substituting, we have for $x\neq\Delta
_{BA},\Delta_{A}$
\begin{align}
\frac{d\nu_{t}^{\prime A}\left(  x\right)  }{dt}  &  =-\rho\nu_{t}^{\prime
A}\left(  x\right)  \left(  \frac{\gamma_{O}}{2}\sum_{x:x_{O}>0}\frac{\nu
_{t}^{\prime O}\left(  x\right)  }{\left\vert x\right\vert }+\gamma_{B}%
\sum_{x:x_{B}>0,x_{AB}=0}\frac{\nu_{t}^{\prime B}\left(  x\right)
}{\left\vert x\right\vert }\right) \nonumber\\
&  +\frac{\rho\frac{\gamma_{O}}{2}\left\vert x\right\vert }{\left\vert
x-\Delta_{A}\right\vert }\nu_{t}^{\prime A}\left(  x-\Delta_{A}\right)
\left(  1-\delta_{x_{A}}\right)  \sum_{x:x_{O}>0}\frac{\nu_{t}^{\prime
O}\left(  x\right)  }{\left\vert x\right\vert }\label{010}\\
&  +\frac{\rho\gamma_{B}\left\vert x\right\vert }{\left\vert x-\Delta
_{BA}\right\vert }\nu_{t}^{\prime A}\left(  x-\Delta_{BA}\right)  \left(
1-\delta_{x_{BA}}\right)  \sum_{x:x_{B}>0,x_{AB}=0}\frac{\nu_{t}^{\prime
B}\left(  x\right)  }{\left\vert x\right\vert }\nonumber\\
&  +\frac{\left\vert x\right\vert }{\left\vert x+\Delta_{A}\right\vert }%
\nu_{t}^{\prime A}\left(  x+\Delta_{A}\right)  \mathbf{\gamma}_{A}%
\delta_{x_{BA}}+\frac{\left\vert x\right\vert }{\left\vert x+\Delta
_{BA}\right\vert }\nu_{t}^{\prime A}\left(  x+\Delta_{BA}\right)
\mathbf{\gamma}_{BA}\nonumber\\
&  -\nu_{t}^{\prime A}\left(  x\right)  \left(  \gamma_{BA}\left(
1-\delta_{x_{BA}}\right)  +\gamma_{A}\left(  1-\delta_{x_{A}}\right)
\delta_{x_{BA}}\right)  .\nonumber
\end{align}
Also,%
\begin{align}
&  \frac{d\nu_{t}^{\prime A}\left(  \Delta_{A}\right)  }{dt}=-\rho\nu
_{t}^{\prime A}\left(  \Delta_{A}\right)  \left(  \frac{\gamma_{O}}{2}%
\sum_{x:x_{O}>0}\frac{\nu_{t}^{\prime O}\left(  x\right)  }{\left\vert
x\right\vert }+\gamma_{B}\sum_{x:x_{B}>0,x_{AB}=0}\frac{\nu_{t}^{\prime
B}\left(  x\right)  }{\left\vert x\right\vert }\right) \nonumber\\
&  -\rho\frac{\gamma_{O}}{2}\sum_{\left\vert x\right\vert >0}\frac{\nu
_{t}^{\prime A}\left(  x\right)  }{\left\vert x\right\vert }\sum_{x:x_{O}%
>0}\frac{\nu_{t}^{\prime O}\left(  x\right)  }{\left\vert x\right\vert
}\label{11}\\
&  +\frac{\gamma_{O}}{2}\sum_{x:x_{O}>0}\frac{\nu_{t}^{\prime O}\left(
x\right)  }{\left\vert x\right\vert }+\frac{1}{2}\nu_{t}^{\prime A}\left(
2\Delta_{A}\right)  \mathbf{\gamma}_{A}+\frac{1}{2}\nu_{t}^{\prime A}\left(
\Delta_{A}+\Delta_{BA}\right)  \mathbf{\gamma}_{BA}-\nu_{t}^{\prime A}\left(
\Delta_{A}\right)  \gamma_{A},\nonumber
\end{align}%
\begin{align}
\frac{d\nu_{t}^{\prime A}\left(  \Delta_{BA}\right)  }{dt}  &  =-\rho\nu
_{t}^{\prime A}\left(  \Delta_{BA}\right)  \left(  \frac{\gamma_{O}}{2}%
\sum_{x:x_{O}>0}\frac{\nu_{t}^{\prime O}\left(  x\right)  }{\left\vert
x\right\vert }+\gamma_{B}\sum_{x:x_{B}>0,x_{AB}=0}\frac{\nu_{t}^{\prime
B}\left(  x\right)  }{\left\vert x\right\vert }\right) \nonumber\\
&  -\rho\gamma_{B}\sum_{\left\vert x\right\vert >0}\frac{\nu_{t}^{\prime
A}\left(  x\right)  }{\left\vert x\right\vert }\sum_{x:x_{B}>0,x_{AB}=0}%
\frac{\nu_{t}^{\prime B}\left(  x\right)  }{\left\vert x\right\vert
}\label{12}\\
&  +\gamma_{B}\sum_{x:x_{B}>0,x_{AB}=0}\frac{\nu_{t}^{\prime B}\left(
x\right)  }{\left\vert x\right\vert }+\frac{1}{2}\nu_{t}^{\prime A}\left(
2\Delta_{BA}\right)  \mathbf{\gamma}_{BA}-\nu_{t}^{\prime A}\left(
\Delta_{BA}\right)  \gamma_{BA}\nonumber
\end{align}
The identical relations hold for the measure $\nu_{t}^{\prime B}.$ For the
measure $\nu_{t}^{\prime O}$ we get
\begin{align}
\frac{d\nu_{t}^{\prime O}\left(  x\right)  }{dt}  &  =\rho\left(
\frac{\left\vert x\right\vert \nu_{t}^{\prime O}\left(  x-1\right)
}{\left\vert x-1\right\vert }-\nu_{t}^{\prime O}\left(  x\right)  \right)
\left[  \gamma_{AB}\sum_{x:x_{AB}>0}\frac{\nu_{t}^{\prime B}\left(  x\right)
}{\left\vert x\right\vert }+\gamma_{BA}\sum_{x:x_{BA}>0}\frac{\nu_{t}^{\prime
A}\left(  x\right)  }{\left\vert x\right\vert }\right] \nonumber\\
&  +\left[  -\nu_{t}^{\prime O}\left(  x\right)  +\frac{\left\vert
x\right\vert }{\left\vert x+1\right\vert }\nu_{t}^{\prime O}\left(
x+1\right)  \right]  \mathbf{\gamma}_{O}\text{ for }\left\vert x\right\vert
>1, \label{13}%
\end{align}%
\begin{align}
\frac{d\nu_{t}^{\prime O}\left(  1\right)  }{dt}  &  =-\rho\left(  \sum
_{x\geq1}\frac{\nu_{t}^{\prime O}\left(  x\right)  }{\left\vert x\right\vert
}+\nu_{t}^{\prime O}\left(  1\right)  \right)  \left[  \gamma_{AB}%
\sum_{x:x_{AB}>0}\frac{\nu_{t}^{\prime B}\left(  x\right)  }{\left\vert
x\right\vert }+\gamma_{BA}\sum_{x:x_{BA}>0}\frac{\nu_{t}^{\prime A}\left(
x\right)  }{\left\vert x\right\vert }\right] \nonumber\\
&  +\left[  -\nu_{t}^{\prime O}\left(  1\right)  +\frac{1}{2}\nu_{t}^{\prime
O}\left(  2\right)  \right]  \mathbf{\gamma}_{O}+\left[  \gamma_{AB}%
\sum_{x:x_{AB}>0}\frac{\nu_{t}^{\prime B}\left(  x\right)  }{\left\vert
x\right\vert }+\gamma_{BA}\sum_{x:x_{BA}>0}\frac{\nu_{t}^{\prime A}\left(
x\right)  }{\left\vert x\right\vert }\right]  . \label{14}%
\end{align}
As we see, the quadratic terms all have the factor $\rho$ in front of them.
Putting it to zero, we get the linear part of the evolution:%
\begin{align}
\frac{d\nu_{t}^{\prime A}\left(  x\right)  }{dt}  &  =\frac{\left\vert
x\right\vert }{\left\vert x\right\vert +1}\nu_{t}^{\prime A}\left(
x+\Delta_{A}\right)  \mathbf{\gamma}_{A}\delta_{x_{BA}}+\frac{\left\vert
x\right\vert }{\left\vert x\right\vert +1}\nu_{t}^{\prime A}\left(
x+\Delta_{BA}\right)  \mathbf{\gamma}_{BA}\label{15}\\
&  -\nu_{t}^{\prime A}\left(  x\right)  \left(  \gamma_{BA}\left(
1-\delta_{x_{BA}}\right)  +\gamma_{A}\left(  1-\delta_{x_{A}}\right)
\delta_{x_{BA}}\right) \nonumber
\end{align}
for $x\neq\Delta_{BA},\Delta_{A}.$ Also,%

\begin{align}
\frac{d\nu_{t}^{\prime A}\left(  \Delta_{A}\right)  }{dt}  &  =\frac
{\gamma_{O}}{2}\sum_{x:x_{O}>0}\frac{\nu_{t}^{\prime O}\left(  x\right)
}{\left\vert x\right\vert }+\frac{1}{2}\nu_{t}^{\prime A}\left(  2\Delta
_{A}\right)  \mathbf{\gamma}_{A}\label{16}\\
&  +\frac{1}{2}\nu_{t}^{\prime A}\left(  \Delta_{A}+\Delta_{BA}\right)
\mathbf{\gamma}_{BA}-\nu_{t}^{\prime A}\left(  \Delta_{A}\right)  \gamma
_{A},\nonumber
\end{align}%
\begin{equation}
\frac{d\nu_{t}^{\prime A}\left(  \Delta_{BA}\right)  }{dt}=\gamma_{B}%
\sum_{x:x_{B}>0,x_{AB}=0}\frac{\nu_{t}^{\prime B}\left(  x\right)
}{\left\vert x\right\vert }+\frac{1}{2}\nu_{t}^{\prime A}\left(  2\Delta
_{BA}\right)  \mathbf{\gamma}_{BA}-\nu_{t}^{\prime A}\left(  \Delta
_{BA}\right)  \gamma_{BA}. \label{17}%
\end{equation}
Exchanging the indices $A\longleftrightarrow B,$ we get the equations for
$\nu_{t}^{\prime B}.$ As for $\nu_{t}^{\prime O},$ we have%

\begin{equation}
\frac{d\nu_{t}^{\prime O}\left(  x\right)  }{dt}=\left[  -\nu_{t}^{\prime
O}\left(  x\right)  +\frac{\left\vert x\right\vert }{\left\vert x+1\right\vert
}\nu_{t}^{\prime O}\left(  x+1\right)  \right]  \mathbf{\gamma}_{O}\text{ for
}\left\vert x\right\vert >1, \label{18}%
\end{equation}%
\begin{equation}
\frac{d\nu_{t}^{\prime O}\left(  1\right)  }{dt}=\left[  -\nu_{t}^{\prime
O}\left(  1\right)  +\frac{1}{2}\nu_{t}^{\prime O}\left(  2\right)  \right]
\mathbf{\gamma}_{O}+\sum_{a=AB,BA}\left[  \gamma_{a}\sum_{x:x_{a}>0}\frac
{\nu_{t}^{\prime a}\left(  x\right)  }{\left\vert x\right\vert }\right]
\text{ for }x=1. \label{19}%
\end{equation}

We want to establish the contraction property of the evolution $\left(
\ref{E5}\right)  $ -- so, in particular, the evolution $\left(  \ref{010}%
\right)  -\left(  \ref{14}\right)  .$

To exhibit it we will compare it with the linear evolution
\begin{equation}
\dot{\mu}^{l}(t)=G\mu^{l}(t),\ \mu^{l}\left(  0\right)  =\mu_{0}, \label{E4}%
\end{equation}
with the same initial point. This is the Markov process with rates $\left(
\ref{02}\right)  -\left(  \ref{03}\right)  ,$ discussed above. In our example
the linear evolution is given by $\left(  \ref{15}\right)  -\left(
\ref{19}\right)  .$ Its informal description is the following. Consider the
infinite network $\mathcal{G}_{\infty},$ which is populated by infinitely many
clients, having nevertheless zero density: $\rho=0$. The initial distribution
$\mu_{0}$ of the queue seen by an average client can be arbitrary. The
dynamics is the following: after waiting in the queue and then being served,
the client goes to a next server of corresponding type -- but he finds it free
with probability one! The reason is the vanishing density: it is improbable
that a client will get to an occupied server. Therefore in the limit
$t\rightarrow\infty$ the measure $\mu^{l}(t)$ is concentrated only on points
$x\in\mathcal{Z}$ with property $\left\vert x\right\vert =1.$

This linear evolution is contracting in the following sense: for every $K>0$
there exists the time $T=T\left(  K\right)  <\infty,$ such that for every two
trajectories $\mu_{1}^{l}(t)$ and $\mu_{2}^{l}(t)$ with $\left\Vert \mu
_{i}^{l}(0)\right\Vert \leq K$ we have for all $t\geq0$
\begin{equation}
\Vert\mu_{1}^{l}(t+T)-\mu_{2}^{l}(t+T)\Vert\leq\frac{1}{2}\Vert\mu_{1}%
^{l}(t)-\mu_{2}^{l}(t)\Vert. \label{E6}%
\end{equation}

Let us define on the subspace $Y_{0}=\{\nu\in Y:\nu\left(  \mathcal{Z}\right)
=0\}$ a new norm:
\begin{equation}
\left\Vert \nu\right\Vert _{1}=\int_{0}^{\infty}e^{\beta t}\left\Vert \nu
^{l}(t)\right\Vert dt, \label{E7}%
\end{equation}
where $\nu^{l}(t)$ is the solution of (\ref{E4}) with $\nu^{l}(0)=\nu$. If
$\beta>0$ is small enough, the norm $\left\Vert \cdot\right\Vert _{1}$ is
finite on $Y_{0}$ and equivalent there to $\left\Vert \cdot\right\Vert $.
Moreover (unlike the norm $\left\Vert \cdot\right\Vert $ !), it satisfies the
infinitesimal version of $\left(  \ref{E6}\right)  :$ for any $t\geq0$
\[
\frac{d}{dt}\Vert\mu_{1}^{l}(t)-\mu_{2}^{l}(t)\Vert_{1}\leq-\beta\Vert\mu
_{1}^{l}(t)-\mu_{2}^{l}(t)\Vert_{1}.
\]

\subsection{Non-linear part}

The quadratic part of (\ref{E1}) can be written as
\[
\left[  H(\mu)\right]  \left(  z\right)  =\sum_{x,y\in\mathcal{Z}}%
\mathbf{v}_{xy}\left(  z\right)  \frac{\mu\left(  x\right)  }{\left\vert
x\right\vert }\frac{\mu\left(  y\right)  }{\left\vert y\right\vert },
\]
where $z\in\mathcal{Z}$, and for every pair $x,y\in\mathcal{Z}$ the function
$\mathbf{v}_{xy}\left(  z\right)  $ on $\mathcal{Z}\ $has finite support, the
size of which depends only on the structure of the network $\bar{G}$.

To see this, let us fix a pair of points $x,y\in\mathcal{Z},$ and mark the
equations $\frac{d\nu_{t}^{\prime\ast}\left(  z\right)  }{dt}=...$ from those
listed above, $\left(  \ref{010}\right)  -\left(  \ref{14}\right)  ,$
containing the cross-term $\nu_{t}^{\prime\ast}\left(  x\right)  \nu
_{t}^{\prime\ast}\left(  y\right)  .$ Let us write down the vector
$\mathbf{v}_{xy}=\left\{  \mathbf{v}_{xy}\left(  z\right)  ,z\in
\mathcal{Z}\right\}  .$ This vector is non-zero only if the two indices $x,y$
belong to two different lattices among the three present: $O,A$ or $B.$ For
example, if $x\in O,$ $x\neq1$ and $y\in A,$ $y\neq\Delta_{BA},\Delta_{A},$
then the following six coordinates of $\mathbf{v}_{xy}$ are non-zero:
\[
\mathbf{v}_{xy}\left(  y\right)  =-\frac{\gamma_{O}}{2\left\vert x\right\vert
},\ \mathbf{v}_{xy}\left(  y+\Delta_{A}\right)  =\frac{\gamma_{O}\left\vert
y+\Delta_{A}\right\vert }{2\left\vert x\right\vert \left\vert y\right\vert
},\ \mathbf{v}_{xy}\left(  \Delta_{A}\right)  =-\frac{\gamma_{O}}{2\left\vert
y\right\vert \left\vert x\right\vert },
\]%
\[
\mathbf{v}_{xy}\left(  x\right)  =-\frac{\gamma_{BA}}{\left\vert y\right\vert
},\ \mathbf{v}_{xy}\left(  x+1\right)  =\frac{\left\vert x+1\right\vert
}{\left\vert x\right\vert }\frac{\gamma_{BA}}{\left\vert y\right\vert
},\ \mathbf{v}_{xy}\left(  1\right)  =-\frac{\gamma_{BA}}{\left\vert
x\right\vert \left\vert y\right\vert }%
\]
$\left(  \text{the last three relations require }y_{BA}>0\right)  .$ The rest
of them vanish.

It is also easy to see that for some constant $C>0$ and for each pair
$x,y\in\mathcal{Z}$ we have $\Vert\mathbf{v}_{xy}\left(  \cdot\right)
\Vert\leq Ce^{|x|+|y|}$ (see $\left(  \ref{04}\right)  $).

\subsection{Convergence}

Let us estimate $H(\mu)-H(\nu)$. We have
\[
\Vert H(\mu)-H(\nu)\Vert\leq\sum_{x,y\in\mathcal{Z}}\Vert\mathbf{v}_{xy}%
\Vert\left\vert \frac{\mu_{x}}{\left\vert x\right\vert }\frac{\mu_{y}%
}{\left\vert y\right\vert }-\frac{\nu_{x}}{\left\vert x\right\vert }\frac
{\nu_{y}}{\left\vert y\right\vert }\right\vert \leq
\]%
\[
\leq C\sum_{x,y\in X}e^{|x|+|y|}\left(  \frac{\mu_{x}}{\left\vert x\right\vert
}\left\vert \frac{\mu_{y}}{\left\vert y\right\vert }-\frac{\nu_{y}}{\left\vert
y\right\vert }\right\vert +\frac{\nu_{y}}{\left\vert y\right\vert }\left\vert
\frac{\mu_{x}}{\left\vert x\right\vert }-\frac{\nu_{x}}{\left\vert
x\right\vert }\right\vert \right)  =
\]%
\[
=C\sum_{x,y\in X}\left[  \left(  e^{|x|}\frac{\mu_{x}}{\left\vert x\right\vert
}\right)  \left(  e^{|y|}\frac{\left\vert \mu_{y}-\nu_{y}\right\vert
}{\left\vert y\right\vert }\right)  +\left(  e^{|y|}\frac{\nu_{y}}{\left\vert
y\right\vert }\right)  \left(  e^{|x|}\frac{|\mu_{x}-\nu_{x}|}{\left\vert
x\right\vert }\right)  \right]  \leq
\]%
\begin{equation}
\leq C(\Vert\mu\Vert+\Vert\nu\Vert)\Vert\mu-\nu\Vert. \label{E2}%
\end{equation}

By the equivalence of $\Vert\cdot\Vert$ and $\Vert\cdot\Vert_{1}$, we get the
bound
\begin{equation}
\Vert H(\mu)-H(\nu)\Vert_{1}\leq C_{1}(\Vert\mu\Vert+\Vert\nu\Vert)\Vert
\mu-\nu\Vert_{1}. \label{E8}%
\end{equation}

For our NLMP we have
\begin{equation}
\frac{d}{dt}\Vert\mu(t)-\nu(t)\Vert_{1}\leq-\beta\Vert\mu(t)-\nu(t)\Vert
_{1}+\rho\Vert H(\mu(t))-H(\nu(t))\Vert_{1}. \label{E3}%
\end{equation}
It remains to notice that for every $K>0$ there exists a constant $C_{2}>0$
such for any $\mu(0)\in Y$ with $\left\Vert \mu(0)\right\Vert <K\ $we have
$\Vert\mu(t)\Vert\leq C_{2}K$ for all $t,$ so according to $\left(
\ref{E8}\right)  ,$ the first term beats the second one, once $\rho$ is small.

Now let us go back to our initial NLMP with low load $\rho$. Let
$\varkappa\left(  t\right)  $ be some trajectory of it, with $\left\Vert
\varkappa\left(  t\right)  \right\Vert \leq C_{2}K$, while $\chi^{\rho}$ be
stationary \textquotedblleft Poisson Hypothesis\textquotedblright\ trajectory.
Then the derivative process $\varkappa^{\prime}\left(  t\right)  $ satisfies
$\left(  \ref{E5}\right)  ,$ so by $\left(  \ref{E3}\right)  $ we have
\[
\frac{d}{dt}\Vert\varkappa^{\prime}\left(  t\right)  -\left(  \chi^{\rho
}\right)  ^{\prime}\Vert_{1}\leq\left(  -\beta+\rho C_{1}\left(  \tilde{C}%
_{2}K+\left\Vert \left(  \chi^{\rho}\right)  ^{\prime}\right\Vert \right)
\right)  \Vert\varkappa^{\prime}\left(  t\right)  -\left(  \chi^{\rho}\right)
^{\prime}\Vert_{1}.
\]
Since $\varkappa\left(  t\right)  $ is the only measure \textit{with the load}
$\rho,$ having derivative $\varkappa^{\prime}\left(  t\right)  $, our claim is proven.

\end{document}